\DeclareRobustCommand{\stirling}{\genfrac\{\}{0pt}{}}
\def\identity{\leavevmode\hbox{\small1\kern-3.8pt\normalsize1}}
\newtheorem{theorem}{Theorem}
\newtheorem{lemma}{Lemma}
\newtheorem{corollary}{Corollary} 
\newcommand{\ket}[1]{\left | #1 \right\rangle}
\newcommand{\avg}[1]{\left\langle #1 \right\rangle}
\renewcommand{\epsilon}{\varepsilon}
\newcommand{\Sec}[1]{\section{#1} }
\newcommand{\SubSec}[1]{\subsubsection{\NoCaseChange{#1}} }
\begin{document}

\title{Lieb-Robinson bounds on $n$-partite connected correlation functions}

\author{Minh~Cong~Tran}
\affiliation{Joint Center for Quantum Information and Computer Science, NIST/University of Maryland, College Park, Maryland 20742, USA}
\affiliation{Joint Quantum Institute, NIST/University of Maryland, College Park, Maryland 20742, USA}

\author{James R. Garrison}
\affiliation{Joint Center for Quantum Information and Computer Science, NIST/University of Maryland, College Park, Maryland 20742, USA}
\affiliation{Joint Quantum Institute, NIST/University of Maryland, College Park, Maryland 20742, USA}

\author{Zhe-Xuan Gong}
\affiliation{Joint Center for Quantum Information and Computer Science, NIST/University of Maryland, College Park, Maryland 20742, USA}
\affiliation{Joint Quantum Institute, NIST/University of Maryland, College Park, Maryland 20742, USA}
\affiliation{Department of Physics, Colorado School of Mines, Golden, Colorado 80401, USA}
\author{Alexey V. Gorshkov}
\affiliation{Joint Center for Quantum Information and Computer Science, NIST/University of Maryland, College Park, Maryland 20742, USA}
\affiliation{Joint Quantum Institute, NIST/University of Maryland, College Park, Maryland 20742, USA}

\begin{abstract}
Lieb and Robinson provided bounds on how fast bipartite connected correlations can arise in systems with only short-range interactions. 
We generalize Lieb-Robinson bounds on bipartite connected correlators to multipartite connected correlators.
The bounds imply that an $n$-partite connected correlator can reach unit value in constant time.  Remarkably, the bounds also allow for an $n$-partite connected correlator to reach a value that is \emph{exponentially large} with system size in constant time, a feature which stands in contrast to bipartite connected correlations.
We provide explicit examples of such systems.

\end{abstract}

\pacs{03.65.Ud}

\maketitle
\section{Introduction}
Nonrelativistic quantum mechanics is not explicitly causal. 
Long-range interactions in many physical systems allow spatially separated subsystems to become correlated at arbitrarily high speed~\cite{HK06,FGCG15,EWMK13}. 
They enable superior quantum applications such as fast quantum state transfer~\cite{EGMFG16}.  
However, in finite-dimensional systems with only bounded, short-range interactions, there is a maximum speed at which correlations may grow~\footnote{In infinite-dimensional systems, e.g.\ bosons, correlations may grow at arbitrarily high speed~\cite{EG09}, provided that non-relativistic quantum mechanics still applies.}.
If a bipartite system is initially in a product state, Lieb-Robinson bounds~\cite{LR72} imply that its bipartite connected correlation function $\avg{\mathcal{A}_\mathcal{X}\mathcal{A}_\mathcal{Y}} - \avg{\mathcal{A}_\mathcal{X}}\avg{\mathcal{A}_\mathcal{Y}}$ at time $t$ is upper bounded by $\propto\exp(v_\mathrm{LR}t- r)$~\cite{BHV06,NOS06}, where $r$ is the distance between the two subsystems $\mathcal{X}$ and $\mathcal{Y}$, and $v_\mathrm{LR}$ is the time-independent Lieb-Robinson velocity.
The bounds generate an effective light cone $v_\mathrm{LR}t = r$, outside which any bipartite connected correlation function is exponentially small.

The bounds of Lieb and Robinson are useful in many contexts~\cite{NS10,KGE13,NRSS09,SHKM10,SH10}.
Recent experiments have measured the precise shape of the light cone in many-body systems~\cite{CBPESFGBKK,RGLSSFMGM14}.  In one case, a faster-than-linear light cone was observed in an effective spin chain, thus indicating the presence of long-range interactions~\cite{RGLSSFMGM14}.
The bounds also have implications for quantum state preparation, as
preparation of a quantum state implies successful generation of \emph{all} of its correlations.
The Lieb-Robinson bound on bipartite connected correlations therefore imposes a lower limit for the time one needs to prepare bipartite quantum states when only bounded, short-range interactions are available.
This statement can be directly generalized for multipartite quantum states.
Lower limits for preparation time can be obtained by applying Lieb-Robinson bounds on every connected correlator between all pairs of sites in a system.
However, such two-point connected correlators do not fully characterize multipartite systems, the collective properties of which 
are better captured by multipartite connected correlators. 
For example, in pure states, multipartite correlations reveal the presence of genuine multipartite entanglement~\cite{ZZX06}.
Therefore it is natural to ask whether one may achieve better understanding of multipartite systems by examining Lieb-Robinson-like bounds on multipartite correlators. 
Such a study is timely, given the recent successful measurement of multipartite connected correlators in atomic superfluids~\cite{SKEMRCLGBS17}.

In this paper, we generalize Lieb-Robinson bounds on bipartite connected correlators to multipartite connected correlators.
We then show that there exist systems where the bounds are saturated. 
We argue that the bounds on multipartite correlations provide practical advantages over bipartite bounds.
In addition, our Lieb-Robinson bounds on multipartite connected correlators imply that exponentially large correlations can be created in fixed time, independent of a system's size.
We provide explicit examples of systems with this feature.\\

\Sec{Connected correlations}
Let us first define bipartite connected correlators.
Consider a set of $n$ sites $\Gamma$ and two distinct, non-overlapping subsets $\mathcal{X}\subset\Gamma$ and $\mathcal{Y}\subset\Gamma$.
Denote by $\mathcal{S}(\mathcal{X})$ the set of observables for which support lies entirely in $\mathcal{X}$.
The bipartite \emph{disconnected} correlator between observables $A_{\mathcal{X}}\in \mathcal{S}(\mathcal{X})$ and $A_{\mathcal{Y}}\in \mathcal{S}(\mathcal{Y})$ 
is simply the expectation value of their joint measurement outcomes at equal time, i.e.,\ $\avg{A_\mathcal{X}A_\mathcal{Y}}$.
Often in experiments only single sites are directly accessible.
Observables are then supported by single sites, i.e.,\ $|\mathcal{X}|=|\mathcal{Y}|=1$.
In the following discussions we refer to such correlators as two-point disconnected correlators.

We note that disconnected correlators contain both quantum and classical correlations.
For example in two-qubit systems, the disconnected correlator $\avg{Z_1Z_2}$ (where $Z$ is the Pauli matrix) achieves maximal value in both the fully classical state $\ket{00}$ and the maximally entangled state $\frac{1}{\sqrt2}\left(\ket{00}+\ket{11}\right)$~\cite{HHHH09}.
Their difference lies in the local expectation values $\avg{Z_1}$, $\avg{Z_2}$, which are maximal for the product state and vanish for the maximally entangled state. 
These local expectation values therefore can be said to carry classical information of the systems (in pure states).
The bipartite \emph{connected} correlator is constructed by subtracting this ``classicalness'' from the disconnected correlator:
\begin{align}
	u_2\left(A_\mathcal{X},A_\mathcal{Y}\right) \equiv \avg{A_\mathcal{X} A_\mathcal{Y}} - \avg{A_\mathcal{X}}\avg{A_\mathcal{Y}}.\label{DEF_u2}
\end{align}
In general for mixed systems, if the joint state of $\mathcal{X}\cup\mathcal{Y}$ is a product state, i.e.,~$\rho_{\mathcal{X}\cup\mathcal{Y}}=\rho_\mathcal{X}\otimes\rho_\mathcal{Y}$, its disconnected correlators $\avg{A_\mathcal{X} A_\mathcal{Y}}$ are factorizable into $\avg{A_\mathcal{X}}\avg{A_\mathcal{Y}}$ and therefore all bipartite connected correlators vanish. 
The opposite is also true~\cite{ZZX06}:
\begin{lemma}
\label{LemSep2}
A density matrix $\rho$ is a product state, i.e.,\ there exist complementary subsets $\mathcal{X},\tilde{\mathcal{X}}$ such that $\rho = \rho_\mathcal{X}\otimes\rho_{\tilde{\mathcal{X}}}$, if and only if 
\begin{align}
	u_2(A_\mathcal{X},A_{\tilde{\mathcal{X}}}) = 0,
\end{align}
for all observables $A_\mathcal{X}\in \mathcal{S}(\mathcal{X})$ and $A_{\tilde{\mathcal{X}}}\in \mathcal{S}(\tilde{\mathcal{X}})$.
\end{lemma}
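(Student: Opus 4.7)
The plan is to prove the two directions of the biconditional separately. The forward direction ($\Rightarrow$) is a direct computation: assuming $\rho = \rho_\mathcal{X}\otimes\rho_{\tilde{\mathcal{X}}}$, I would write $\avg{A_\mathcal{X} A_{\tilde{\mathcal{X}}}} = \Tr[(\rho_\mathcal{X}\otimes\rho_{\tilde{\mathcal{X}}})(A_\mathcal{X}\otimes A_{\tilde{\mathcal{X}}})]$, split the trace into a product of traces over the two subsystems, and recognize the factors as $\avg{A_\mathcal{X}}$ and $\avg{A_{\tilde{\mathcal{X}}}}$. Substituting into the definition~(\ref{DEF_u2}) immediately gives $u_2 = 0$ for every pair of local observables.

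The reverse direction ($\Leftarrow$) is the substantive half. I would introduce the difference operator $\Delta \equiv \rho - \rho_\mathcal{X}\otimes\rho_{\tilde{\mathcal{X}}}$, where $\rho_\mathcal{X} = \Tr_{\tilde{\mathcal{X}}}(\rho)$ and $\rho_{\tilde{\mathcal{X}}} = \Tr_{\mathcal{X}}(\rho)$ are the marginals of the given $\rho$. Because these marginals match those of $\rho_\mathcal{X}\otimes\rho_{\tilde{\mathcal{X}}}$ by construction, a short computation shows $\Tr[\Delta\,(A_\mathcal{X}\otimes A_{\tilde{\mathcal{X}}})] = u_2(A_\mathcal{X},A_{\tilde{\mathcal{X}}})$. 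The hypothesis therefore asserts that this trace vanishes for every choice of local observables, and the task reduces to upgrading this statement to the operator identity $\Delta = 0$.

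The key step is a linear-algebraic spanning argument. On each subsystem, every operator $O$ decomposes as $O = \tfrac{1}{2}(O+O^\dagger) + i\cdot\tfrac{1}{2i}(O-O^\dagger)$, a complex combination of two Hermitian operators; hence observables linearly span the full operator space on that subsystem. Taking tensor products, the set $\{A_\mathcal{X}\otimes A_{\tilde{\mathcal{X}}}\}$ spans the full operator space on $\mathcal{X}\cup\tilde{\mathcal{X}}$. By linearity of the trace, $\Tr(\Delta\,O) = 0$ for \emph{every} operator $O$, and non-degeneracy of the Hilbert-Schmidt inner product then forces $\Delta = 0$, i.e., $\rho = \rho_\mathcal{X}\otimes\rho_{\tilde{\mathcal{X}}}$, as required.

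The only mild obstacle I anticipate is justifying that restricting to Hermitian observables is harmless, but the decomposition above handles this cleanly; the remainder of the argument is elementary manipulation of traces and the definition of the reduced density matrix.
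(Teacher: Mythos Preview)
Your proposal is correct and follows essentially the same approach as the paper: both directions rest on the fact that local Hermitian observables span the full local operator algebras, so vanishing of $u_2$ for all observable pairs forces the joint state to coincide with the product of its marginals. The only cosmetic difference is packaging---the paper expands $\rho$ explicitly in a Hermitian operator basis $\{\Gamma_\mu^{\mathcal{X}}\}\otimes\{\Gamma_\nu^{\tilde{\mathcal{X}}}\}$ and factors the resulting expression, whereas you phrase the same spanning argument via the difference operator $\Delta$ and non-degeneracy of the Hilbert--Schmidt inner product.
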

In particular, a nonzero bipartite connected correlator implies bipartite entanglement in pure states.
Lemma \ref{LemSep2} is a consequence of Ref.~\cite{ZZX06}.
We also present a simple proof in Appendix \ref{APP_Sep2}.

A natural generalization of the bipartite connected correlator to multipartite systems is the Ursell function~\cite{U27,S75}. 
The $n$-partite connected correlator between $n$ observables $A_1,\dots,A_n$, which are supported by $n$ distinct subsets of sites $\mathcal{X}_1,\dots,\mathcal{X}_n$, respectively, is defined as
\begin{align}
		u_n \left(A_1,\dots, A_n\right) = \sum_{P} g\left(|P|\right)\prod_{p\in P}\avg{\prod_{j\in p }A_j},\label{EQ_npointdef}
\end{align}
where $g(x)=(-1)^{x-1}(x-1)!$ and the sum is taken over all partitions $P$ of the set $\left\{1,2,\dots,n\right\}$.
The $n$-partite connected correlators can be equivalently defined via either recursive relations or generating functions (see Appendix \ref{APP_defnpoint} for details).

Multipartite connected correlators also arise naturally in many other contexts. 
In quantum field theory, connected Green's functions are multipartite connected correlators of field operators~\cite{H94}. 
Mean field theory is an approximation in which it is assumed that all connected correlators vanish~\cite{K10}; 
in fact, mean field theory fails when there exist significant connected correlations, and one must then seek higher-order approximations.
The cumulant expansion technique is similar to mean field theory, but only multipartite connected correlators of high enough order are ignored.
Therefore, understanding when connected correlations are negligible is important for validating mean field theory and the cumulant expansion.

The relation mentioned above between connected correlators and entanglement holds for $n$-partite connected correlators as well.
It also follows from Ref.~\cite{ZZX06} that $n$-partite connected correlators vanish in product states.
In particular, for pure states, a nonzero $n$-partite connected correlator implies genuine $n$-partite entanglement~\cite{WGE16,PRSF17}:
\begin{lemma}
\label{lemma_sep}
If an $n$-partite system is in a product state, i.e.\ there exist complementary subsystems $\mathcal{X},\bar{\mathcal{X}}\subset S_n$ such that
\begin{align}
	\rho = \rho_{\mathcal{X}}\otimes \rho_{\bar{\mathcal{X}}},
\end{align}
then all $k$-body connected correlators ($2\leq k\leq n$) between some observables $A_1,\dots,A_{k_1}$, for which support lies entirely on $\mathcal{X}$, and observables $B_1,\dots,B_{k_2}$, for which support lies entirely on $\bar{\mathcal{X}}$ $(k_1,k_2\geq 1, k_1+k_2=k)$, vanish,
\begin{align}	u_k\left(A_1,\dots,A_{k_1},B_1,\dots,B_{k_2}\right) = 0.
\end{align}
\end{lemma}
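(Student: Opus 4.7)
My plan is to route through the generating-function characterization of the Ursell function mentioned in Appendix \ref{APP_defnpoint}. In that viewpoint, $u_k$ is obtained by mixed partial differentiation, at zero, of the cumulant generating function $K(\vec\lambda) = \log\avg{\prod_i e^{\lambda_i A_i}}$, where the ordering of the exponentials (and hence of the operators) is chosen to match (\ref{EQ_npointdef}). Once this is set up, the product-state assumption makes the lemma nearly immediate because cumulants of independent subsystems decouple into a sum.

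Concretely, observables on $\mathcal{X}$ commute with observables on $\bar{\mathcal{X}}$, so introducing parameters $\lambda_i$ for $A_i$ and $\mu_j$ for $B_j$, the exponentials can be rearranged as
\begin{equation*}
\prod_i e^{\lambda_i A_i}\prod_j e^{\mu_j B_j} = \Big(\prod_i e^{\lambda_i A_i}\Big)\Big(\prod_j e^{\mu_j B_j}\Big).
\end{equation*}
Under $\rho = \rho_\mathcal{X} \otimes \rho_{\bar{\mathcal{X}}}$, the expectation factorizes, giving $M(\vec\lambda,\vec\mu) = M_\mathcal{X}(\vec\lambda)\,M_{\bar{\mathcal{X}}}(\vec\mu)$ and hence $K(\vec\lambda,\vec\mu) = K_\mathcal{X}(\vec\lambda) + K_{\bar{\mathcal{X}}}(\vec\mu)$. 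Since $k_1,k_2 \ge 1$, computing $u_k$ requires differentiating with respect to at least one $\lambda$ and at least one $\mu$; the $\mu$-derivative annihilates $K_\mathcal{X}$ while the $\lambda$-derivative annihilates $K_{\bar{\mathcal{X}}}$, and the result is zero.

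The one point needing care is the operator-ordering convention for the generating function: within $\mathcal{X}$ (and within $\bar{\mathcal{X}}$) the observables need not commute, so the generating function must be defined with an ordering of factors that reproduces the moments $\avg{\prod_{j\in p} A_j}$ used in (\ref{EQ_npointdef}). This is only bookkeeping, however, because the factorization step relies solely on $[A_i,B_j]=0$ and the product structure of $\rho$, both of which are insensitive to how operators are ordered on each side. A fully elementary alternative is to work directly from (\ref{EQ_npointdef}), factor each $\avg{\prod_{j\in p} C_j}$ into its $\mathcal{X}$- and $\bar{\mathcal{X}}$-pieces, regroup the partition sum by the induced partitions on the $A$- and $B$-indices, and reduce the lemma to the combinatorial identity that the signed count of partial matchings between the two sides vanishes—a standard M\"obius-inversion calculation on the partition lattice that encodes the same cumulant-decoupling principle.
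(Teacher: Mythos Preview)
Your proposal is correct and follows essentially the same route as the paper's proof in Appendix~\ref{APP_sep}: both use the generating-function form of the Ursell function, factor the expectation under $\rho=\rho_{\mathcal X}\otimes\rho_{\bar{\mathcal X}}$ so that the cumulant generating function splits additively into a $\vec\lambda$-piece and a $\vec\mu$-piece, and then observe that any mixed partial derivative in at least one $\lambda_i$ and one $\mu_j$ annihilates both summands. Your operator-ordering caveat is unnecessary in the paper's setting, since the observables entering $u_k$ are supported on pairwise disjoint subsets and hence all commute; with that simplification your argument and the paper's are identical.
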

\begin{corollary}
If an $n$-partite pure state $\ket{\psi}$ has a nonzero $n$-partite connected correlator, then it is genuinely $n$-partite entangled, i.e.\ there exist no subsystems $\mathcal{X}$ and $\tilde{\mathcal{X}}$ such that $\ket{\psi} = \ket{\psi_{\mathcal{X}}}\otimes\ket{\psi_{\bar{\mathcal{X}}}}$.
\end{corollary}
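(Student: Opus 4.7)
The plan is to derive the corollary directly from Lemma \ref{lemma_sep} by contraposition. First I would assume that $\ket\psi$ fails to be genuinely $n$-partite entangled, so that by definition there exist complementary non-empty subsystems $\mathcal X,\bar{\mathcal X}\subset S_n$ (each a union of some of the $n$ parts) with $\ket\psi=\ket{\psi_{\mathcal X}}\otimes\ket{\psi_{\bar{\mathcal X}}}$. Passing to the density matrix gives $\rho=\rho_{\mathcal X}\otimes\rho_{\bar{\mathcal X}}$, matching the hypothesis of Lemma \ref{lemma_sep}.

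Next I would use the fact that each of the $n$ arguments of the $n$-partite connected correlator is supported on a distinct part of the $n$-partite system. Since $\mathcal X$ and $\bar{\mathcal X}$ jointly cover all $n$ parts with neither side empty, the $n$ observables split into some $k_1\ge 1$ supported in $\mathcal X$ and $k_2=n-k_1\ge 1$ supported in $\bar{\mathcal X}$. The Ursell function $u_n$ is manifestly symmetric in its arguments — the sum in (\ref{EQ_npointdef}) is over \emph{all} partitions of $\{1,\dots,n\}$ — so one may relabel the observables as $A_1,\dots,A_{k_1}$ on $\mathcal X$ and $B_1,\dots,B_{k_2}$ on $\bar{\mathcal X}$ without changing the value of $u_n$. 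Applying Lemma \ref{lemma_sep} with $k=n$ then forces $u_n(A_1,\dots,A_{k_1},B_1,\dots,B_{k_2})=0$, contradicting the hypothesis that the $n$-partite connected correlator is nonzero.

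Because essentially all of the content is already packaged in Lemma \ref{lemma_sep}, there is no real obstacle here. The only point that must be verified — and it is immediate from (\ref{EQ_npointdef}) — is the permutation symmetry of $u_n$, which is what allows the grouping of observables dictated by an arbitrary bipartition $\mathcal X\cup\bar{\mathcal X}$ to be matched to the $(A,B)$-labeling used in the statement of Lemma \ref{lemma_sep}.
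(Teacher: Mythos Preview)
Your proposal is correct and matches the paper's approach: the corollary is stated immediately after Lemma~\ref{lemma_sep} as its direct consequence, with no separate proof given, and your contrapositive argument is exactly the intended one-line deduction. The only additional observation you make explicit---the permutation symmetry of $u_n$ from \eqref{EQ_npointdef} needed to regroup the observables---is indeed immediate and implicitly assumed in the paper.
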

A direct proof of Lemma \ref{lemma_sep} is presented in Appendix \ref{APP_sep}.
The combination of Lemma 1 and Lemma 2 tells us that if the bipartite connected correlators are all zero between two regions, then all higher-order connected
correlators are guaranteed to be zero except for the scenario where all observables are supported on one region, or there exists
an observable supported on both regions.

Multipartite connected correlations also provide a practical advantage over bipartite correlations, even though the latter are sufficient to characterize a quantum system. 
Consider a three-body system for example. 
The collection of local expectation values and connected correlators,
\begin{align}
U=\bigg\{\avg{A_1},\avg{A_2}&,\avg{A_3},u_2(A_1,A_2),u_2(A_1,A_3),\nonumber\\
&u_2(A_2,A_3),u_3(A_1,A_2,A_3)\bigg\},
\end{align}
where each $A_j$ runs over a complete single site basis (e.g.\  the Pauli matrices $X,Y,Z$), defines a unique tripartite quantum state.
Another equivalent collection $\tilde U$ can be constructed from $U$ by replacing $u_3(A_1,A_2,A_3)$ with a bipartite connected correlator between one subsystem and the rest, e.g.\ $u_2(A_1,A_2A_3)$.
Although the two collections $U$ and $\tilde U$ are equivalent, $u_3(A_1,A_2,A_3)$ and $u_2(A_1,A_2A_3)$ carry different information about the system.
The three-point connected correlators $u_3(A_1,A_2,A_3)$ characterize global properties while $u_2(A_1,A_2A_3)$ only tell us about local properties across the cut between subsystem 1 and the rest.
If global properties, such as genuine three-body entanglement, are of concern, then tripartite connected correlators are superior.
To have a chance at detecting genuine tripartite entanglement using only bipartite connected correlators, one must consider all possible bipartitions of the system.
There are only 3 such partitions for a tripartite system, namely $1|23,2|13$ and $3|12$.
But for $n$-partite systems, the number of bipartitions scales exponentially with $n$.
Computing all of them would be impractical. 
Even then there is no guarantee they would detect genuine multipartite entanglement.
Consider for example the following pure state of 3 qubits,
\begin{align}
\ket{\psi}=&\sqrt{\frac{5}{24}}\ket{000} + \sqrt{\frac18}\ket{001} + \sqrt{\frac{1}{12}}\ket{010} + \sqrt\frac{1}{12}\ket{011}\nonumber\\
&+ \sqrt\frac{1}{4}\ket{100} + \sqrt\frac{1}{8}\ket{101} + \sqrt\frac{1}{12}\ket{110} + \sqrt\frac{1}{24}\ket{111}.
\end{align}
Its three-point connected correlator $u_3(Z_1,Z_2,Z_3)=\frac{1}{18}$ implies genuine tripartite entanglement in $\ket{\psi}$.
Meanwhile, non-zero bipartite connected correlators across the cuts $2|13$ and $3|12$, $u_2(Z_2,Z_1Z_3)$ and $u_2(Z_3,Z_1Z_2)$, only tell us that there is entanglement between qubits 2 and 3.
Because the bipartite connected correlator across $1|23$ $u_2(Z_1,Z_2Z_3)$ is zero, it is inconclusive whether the first qubit is entangled with the others without considering higher order correlators.

This example demonstrates why multipartite connected correlators are better candidates than bipartite counterparts in multipartite entanglement detection schemes. 
It is therefore important to understand how these multipartite correlations evolve in physical systems.
\begin{figure}[t]
    \centering
    \includegraphics[width=0.45\textwidth]{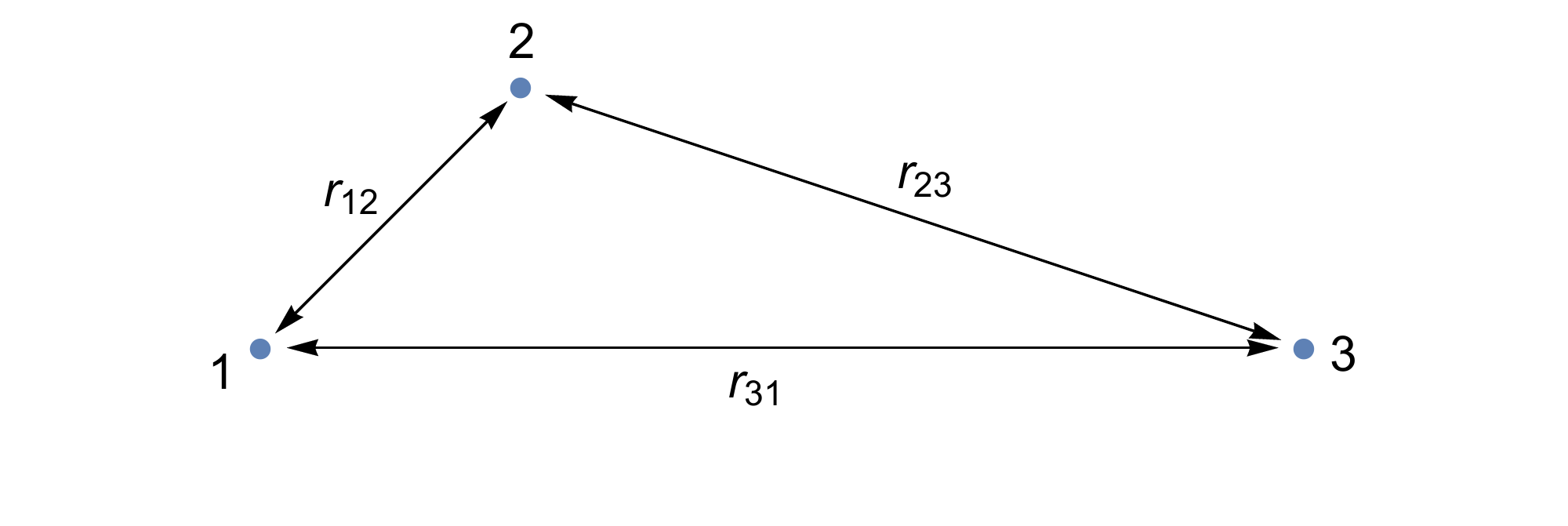}
 
    \caption{A typical three-body system. Each dot represents one site. There are three relevant length scales $r_{12},r_{23}$ and $r_{31}$. Which length scale will define the three-body Lieb-Robinson bound?}
    \label{FIG_3partite}
\end{figure}

\Sec{Multipartite Lieb-Robinson bounds}
Our main result is Lieb-Robinson-like bounds on $n$-partite connected correlators in systems evolving from fully product states under short-range interactions, e.g.\ 
\begin{align}
	H = \sum_{\avg{i,j}} J_{ij}  V_i   V_j,
\end{align}
where $ V_i$ is the spin operator of the $i^{\text{th}}$ site, $|J_{ij}|\leq 1$ is the interaction strength between the $i^{\text{th}}$ and the $j^{\text{th}}$ sites, and the sum is over all neighboring $i,j$.
But before we present the bounds, let us discuss general features we expect from such bounds. 
These bounds are of the form
\begin{align}
	u_n \leq C_n \exp(v_\mathrm{LR} t - r),\label{EQ_candidate}
\end{align}
where $C_n$ is a constant, $r$ is a relevant length scale, and $v_\mathrm{LR}$ is the same Lieb-Robinson velocity as in the bipartite bounds.
Let us now examine the scaling of $C_n$ with $n$.
If all observables have unit norm, bipartite connected correlators are upper bounded by 1 regardless of a system's size.
However, multipartite connected correlators can increase in value with the number of subsystems.
For example, in the $n$-qubit Greenberger-Horne-Zeilinger (GHZ) state,
\begin{align}
	\ket{\mathrm{GHZ}} = \frac{\ket{0}^{\otimes n}+\ket{1}^{\otimes n}}{\sqrt{2}}.
\end{align}
the $n$-point connected correlator $u_n(Z_1,\dots,Z_n) = \mathcal{O}(n^n)$ (details in Appendix \ref{APP_cal_corr}).
Therefore we expect $C_n$ to grow with $n$ as well, $C_n = \mathcal{O}(n^n)$.
Another constant we would like to understand is the critical distance $r$.
In the Lieb-Robinson bound on a bipartite connected correlator, the critical distance is simply the distance between the two involved parties.
However, in a multipartite system there are many relevant length scales which could possibly serve as the critical distance.
As an example, let us consider a three-qubit system (Fig.~\ref{FIG_3partite}).
Without loss of generality we assume $r_{12}<r_{23}<r_{31}$ where $r_{ij}$ denotes the distance between the $i^{\text{th}}$ and $j^{\text{th}}$ qubits.
We argue that a bound of the form \eqref{EQ_candidate} with $r=r_{12}$ is valid but trivial.
Intuitively an observable initially localized at the first qubit will need time to spread a distance $r_{12}$ before ``seeing'' another qubit.
Is there a stronger bound, i.e.\ inequality \eqref{EQ_candidate} with larger value for $r$? 
The largest distance $r_{31}$ would make the most sense, since at $t = r_{31}/v$, an observable initially localized at one qubit has enough time to spread to all others.
We shall show below that the critical distance for the tightest bound is neither the smallest ($r_{12}$) nor the largest distance ($r_{31}$), but actually the intermediate length scale $r_{23}$. 
This surprising result leads to unexpected consequences, including the creation of exponentially large connected correlations in unit time.

\begin{figure}[t]
    \centering
    \includegraphics[width=0.45\textwidth]{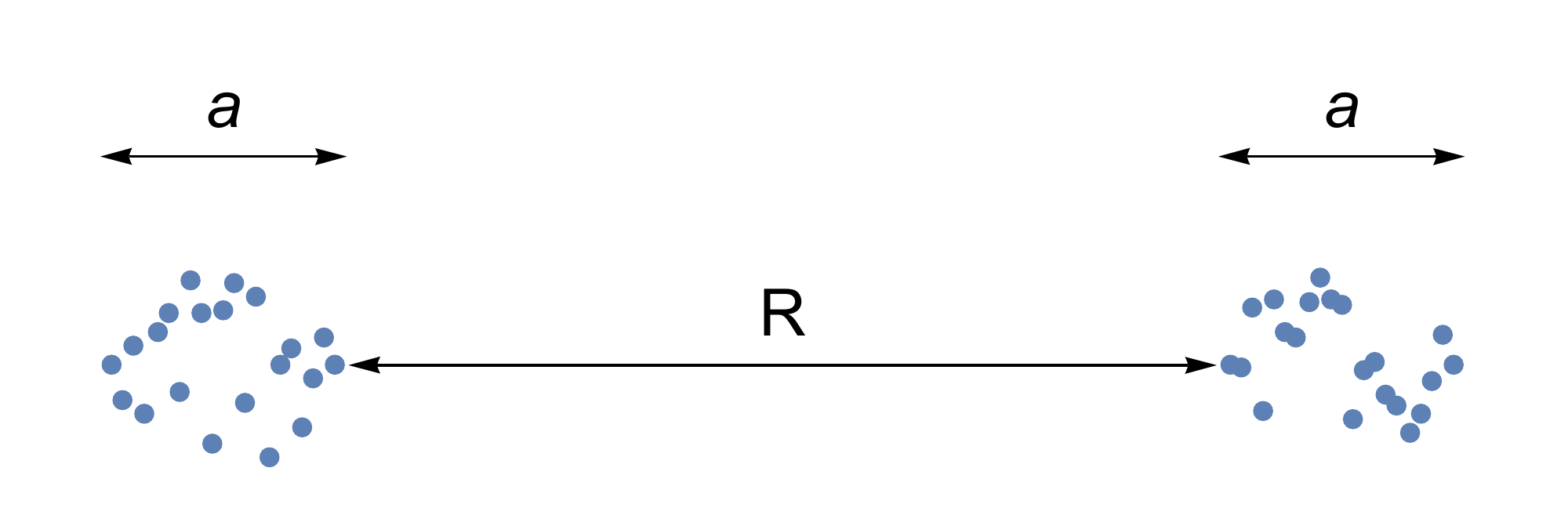}
 
    \caption{A geometry where $n$ sites (blue dots) are divided into two cliques such that the clique size $a$ is much smaller than the distance $R$ between cliques.}
    \label{FIG_GHZsaturated}
\end{figure}

\begin{theorem}
\label{TH_npointLR}
Given $n$ non-overlapping subsystems $\left\{\mathcal{X}_1,\dots,\mathcal{X}_n\right\} = S$ initialized to a fully product state $\ket{\psi_{\mathcal{X}_1}}\otimes\dots\otimes \ket{\psi_{\mathcal{X}_n}}$ and evolved under short-range interactions, the $n$-partite connected correlator between observables $A_i\in\mathcal{S}(\mathcal{X}_i)$ $(i=1,\dots,n)$ is bounded,
\begin{align}
	\left|u_n\left(A_1,\dots,A_n\right)\right| \leq C_n \exp(v_\mathrm{LR}t-R),\label{EQ_npointLR}
\end{align}
where $v_\mathrm{LR}$ is the same velocity as in the bipartite Lieb-Robinson bounds, $C_n=\frac{n^n}{4} C_2$ with $C_2$ being the constant in bipartite Lieb-Robinson bounds \footnote{See Eq.\ (6) of Ref.~\cite{RGLSSFMGM14}}, and
\begin{align}
	R=\max_{\mathcal{S}_1\subset S} d(\mathcal{S}_1,\bar{\mathcal{S}_1})
\end{align} 
is 
the largest distance between any subset $\mathcal{S}_1\subset\mathcal{S}$ and its complementary subset $\bar{\mathcal{S}_1}$. 
Here the distance $d$ between two sets of sites is the shortest distance between a site in one set and a site in the other set. 
\end{theorem}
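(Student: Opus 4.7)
The plan is to compare the true Heisenberg-evolved observables $A_i(t)$ with observables $\tilde A_i(t)$ evolved under a decoupled Hamiltonian that factorizes across a well-chosen cut, and then bound the difference using multilinearity of the Ursell function together with the bipartite Lieb-Robinson bound. First I will pick a bipartition $(\mathcal{U},\mathcal{V})$ of the $n$ subsystems that realizes the maximum cut distance $R$, and define $\tilde H$ to be $H$ with all interaction terms crossing the cut removed. By the short-range assumption, every removed term lies at distance at least $R$ from every observable $A_i$; and because $\tilde H = H_\mathcal{U} + H_\mathcal{V}$ factorizes, the initial product state remains a product across the cut at all times under $\tilde H$. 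Lemma~\ref{lemma_sep} then yields
\begin{align}
\tilde u_n \equiv u_n\bigl(\tilde A_1(t),\ldots,\tilde A_n(t)\bigr) = 0.
\end{align}

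To bound $|u_n| = |u_n - \tilde u_n|$, I use that the Ursell function \eqref{EQ_npointdef} is multilinear in its $n$ arguments (each $A_i$ appears linearly in exactly one factor of every summand). Telescoping across the arguments yields
\begin{align}
u_n - \tilde u_n = \sum_{i=1}^n u_n\bigl(\tilde A_1(t),\ldots,\tilde A_{i-1}(t),\,A_i(t)-\tilde A_i(t),\,A_{i+1}(t),\ldots,A_n(t)\bigr).
\end{align}
The bipartite Lieb-Robinson bound, applied in operator-norm form across the cut, controls each single-argument difference: $\|A_i(t) - \tilde A_i(t)\| \le (C_2/4)\,e^{v_\mathrm{LR}t - R}$, the factor $1/4$ being fixed by matching the $n=2$ case to the standard bipartite connected-correlator bound (two telescoping insertions together with the elementary estimate $|u_2(X,Y)|\le 2\|X\|\|Y\|$ account for it). The Ursell definition furnishes the elementary combinatorial inequality $|u_n(B_1,\ldots,B_n)| \le T_n \prod_j \|B_j\|$ with $T_n \equiv \sum_P (|P|-1)! = \sum_{k=1}^n (k-1)!\,S(n,k)$, where $S(n,k)$ is the Stirling number of the second kind. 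Assuming unit-normalized observables, combining these three ingredients gives $|u_n| \le n T_n (C_2/4)\,e^{v_\mathrm{LR}t-R}$.

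The final, and what I expect to be the main technical, step is the combinatorial inequality $T_n \le n^{n-1}$, which will promote the estimate to $|u_n|\le (n^n/4)\,C_2\,e^{v_\mathrm{LR}t-R}$. The inequality is tight at $n=2$ (where $T_2 = 2 = 2^{2-1}$) but becomes loose for large $n$, since $T_n$ is an ordered-Bell-type quantity growing only factorially while $n^{n-1}$ grows super-exponentially. I plan to verify small $n$ by direct computation and handle general $n$ via standard Stirling-number estimates such as $S(n,k) \le \binom{n}{k} k^{n-k}/2$, or more elegantly via a combinatorial injection from set partitions with a cyclic ordering of their blocks into rooted labeled trees on $n$ vertices (of which there are $n^{n-1}$ by Cayley's formula). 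A secondary bookkeeping concern is ensuring that the operator-norm form of the bipartite bound used here is consistent with the connected-correlator form that defines $C_2$; this is precisely what fixes the $1/4$ factor in the single-argument estimate above.
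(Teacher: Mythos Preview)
Your approach differs substantially from the paper's, which proceeds by induction on $n$ using the algebraic identity
\[
u_k(A_1,\dots,A_k) \;=\; u_2\!\Bigl(\prod_{i\in S_1}A_i,\prod_{i\in S_2}A_i\Bigr)\;-\;\sum_{P\in\mathcal{P}_{12}}\prod_{p\in P}u_{|p|}(\tilde A_p),
\]
where $\mathcal{P}_{12}$ collects those partitions of $\{1,\dots,k\}$ with at least one block straddling the chosen bipartition $S_1\cup S_2$. The first term is bounded by the bipartite Lieb--Robinson bound on connected correlators, used as a black box; every product in the sum contains at least one straddling factor $u_{|p|}$ with $|p|<k$, handled by the induction hypothesis with distance at least $d(S_1,S_2)$. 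Optimising over bipartitions yields $R$ with no loss in the exponent, and a separate counting argument gives $C_k\le k^kC_2/4$.

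Your localisation-plus-telescoping route has a genuine gap at the claim ``every removed term lies at distance at least $R$ from every observable $A_i$.'' No single spatial cut achieves this. If you remove the interactions at the boundary of $\mathcal V$ (so that $\tilde H=H|_{\mathcal V}+H|_{\Lambda\setminus\mathcal V}$ truly decouples and Lemma~\ref{lemma_sep} gives $\tilde u_n=0$), then the removed terms sit at distance $O(1)$ from the observables $A_j$ with $\mathcal X_j\subset\mathcal V$; hence $\|A_j(t)-\tilde A_j(t)\|$ is \emph{not} controlled by $e^{v_{\mathrm{LR}}t-R}$, and those telescoping summands are not small. Placing the cut midway instead puts every $A_i$ at distance at most $R/2$ from the removed terms, so the exponent degrades to $v_{\mathrm{LR}}t-R/2$. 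Using separate localisations on the two sides (each giving the full $R$ in the operator-norm error) does not rescue the argument either: the resulting $\tilde A_i$'s on opposite sides then have overlapping support in the intermediate region, and Lemma~\ref{lemma_sep} no longer forces $\tilde u_n=0$. Your calibration of the prefactor by ``matching the $n=2$ case'' presupposes exactly the operator-norm estimate that fails. The combinatorial ingredients (multilinearity, $|u_n|\le T_n\prod_j\|B_j\|$, and $T_n\le n^{n-1}$) are sound; the obstruction is purely in the localisation step, which the paper's inductive argument avoids by never passing through operator norms and instead applying the bipartite \emph{connected-correlator} bound directly.
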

\begin{proof}
We shall explain our proof in the simplest case of $n = 3$.
We use the following identity (given in Appendix \ref{APP_defnpoint}) to write disconnected correlators in terms of connected correlators,
\begin{align}
	\avg{A_1A_2A_3} =& u_3(A_1,A_2,A_3) + u_2(A_2,A_3)\avg{A_1}\nonumber\\
    &+u_2(A_1,A_3)\avg{A_2}+u_2(A_1,A_2)\avg{A_3}\nonumber\\
    &+\avg{A_1}\avg{A_2}\avg{A_3}. \label{EQ_th1proof3bodyidentity}
\end{align}
Notice that the last two terms on the right hand side sum up to $\avg{A_1A_2}\avg{A_3}$.
If we move this term to the left hand side, we obtain an expression of $u_3$ in terms of only bipartite connected correlators (and local expectation values),
\begin{align}
	u_3(A_1,A_2,A_3) =& u_2(A_1A_2,A_3)-u_2(A_1,A_3)\avg{A_2}\nonumber\\
    &-u_2(A_2,A_3)\avg{A_3},
\end{align}
where the local expectation values $\avg{A_2},\avg{A_3}$ are between -1 and 1. 
Therefore we may bound the three-body connected correlator using the bipartite Lieb-Robinson bound as follows,
\begin{align}
	&\left|u_3(A_1,A_2,A_3)\right|  \nonumber\\
    &\leq  \left|u_2(A_1A_2,A_3)\right|+\left|u_2(A_1,A_3)\right|
  +\left|u_2(A_2,A_3)\right|\nonumber\\
    &\leq C_2 e^{v_\mathrm{LR}t - r_{12|3}}+C_2 e^{v_\mathrm{LR}t-r_{13}}+C_2 e^{v_\mathrm{LR}t-r_{23}}\nonumber\\
    &\leq 3C_2 e^{v_\mathrm{LR}t - r_{12|3}}, \label{EQ_th1proof12vs3}
\end{align}
where $r_{12|3}=\min\left\{r_{12},r_{13}\right\}$ is the distance from the third site to the other two and $C_2$ comes from bipartite Lieb-Robinson bounds~\cite{RGLSSFMGM14}.
One may notice that at the beginning the three sites play equal roles, but somehow this symmetry is broken in Eq.~\eqref{EQ_th1proof12vs3}.
The reason is the choice to team up $\avg{A_1}\avg{A_2}\avg{A_3}$ and $u_2(A_1,A_2)\avg{A_3}$ after Eq.~\eqref{EQ_th1proof3bodyidentity}.
Instead, we may replace the latter with either $u_2(A_2,A_3)\avg{A_1}$ or $u_2(A_1,A_3)\avg{A_2}$ to obtain two different bounds in the form of Eq.~\eqref{EQ_th1proof12vs3}, with either $r_{23|1}$ or $r_{13|2}$ in place of $r_{12|3}$.
The tightest bound corresponds to the smallest distance among $r_{23|1},r_{13|2},r_{12|3}$, and hence the theorem follows.
Proof for general $n$ follows the exact same line and is presented in full in Appendix \ref{APP_maintheoremproof}. 
\end{proof}

\begin{figure}[t]
    \centering
    \includegraphics[width=0.5\textwidth]{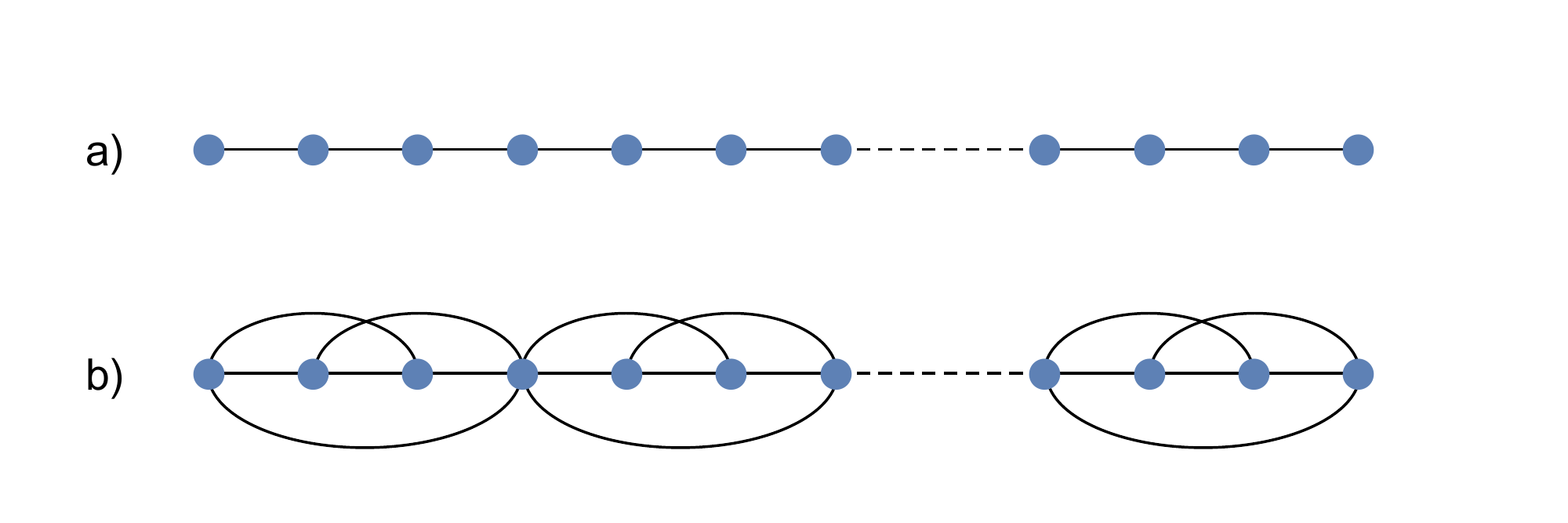}
    \caption{$n$-qubit cluster states represented by one-dimensional graphs of $n$ vertices. (a) Only consecutive vertices are connected by edges of length 1. (b) Some edges are longer than 1 but interactions are still local.}
    \label{cluster}
\end{figure}

Since the proof is inductive on the number of sites $n$, the multipartite Lieb-Robinson bounds are in general weaker than bipartite Lieb-Robinson bounds.
Violation of our bound for a multipartite connected correlator implies violation of at least one bipartite bound. 
Nevertheless, the multipartite Lieb-Robinson bounds in Theorem \ref{TH_npointLR} can be saturated.
For example, consider a geometry of $n$ sites where they are divided into two non-empty cliques, each of spatial size $a$.
The two cliques are separated by a large distance $R\gg a$ (Fig.~\ref{FIG_GHZsaturated}).
Lieb-Robinson bounds of $n$-partite connected correlators for this geometry are
saturated by preparing the GHZ state of $n$ qubits, which can be done in time $t\approx R/v_\mathrm{LR}$.

Whether our $n$-partite Lieb-Robinson bounds are tight for all geometries is still an open question.
The geometry of Fig.~\ref{FIG_GHZsaturated} resembles a bipartite system, where each clique plays the role of one party.
There are geometries which are very different from bipartite systems and, as a consequence, they offer some unique and interesting implications.
For example, as mentioned before, the critical distance in the multipartite Lieb-Robinson bound is neither the largest nor the smallest distance. 
In the asymptotic limit of large $n$, these quantities can be very different.
We shall now examine such examples.\\

\Sec{Fast generation of multipartite correlation}
In a bipartite system, the time needed to create bipartite correlators of order $\mathcal{O}(1)$ increases proportionally to the distance between the two subsystems.
It is natural to expect the time needed to create $n$-point correlators of order $\mathcal{O}(1)$ in an $n$-partite system to increase with the spatial size of the system.
However, Theorem 1 suggests that it may not necessarily be the case.
For example, consider an equally spaced one-dimensional chain of $n$ qubits [see Fig.\ \ref{cluster}].
If the distance between two consecutive qubits is fixed, the spatial length of the chain increases as $\mathcal{O}(n)$.
Therefore $2$-point connected correlators between the end qubits can only be created after $\mathcal{O}(n)$ time.
Meanwhile, Theorem 1 suggests that $n$-point connected correlators of order $\mathcal{O}(1)$ between all $n$ qubits might be created in $\mathcal{O}(1)$ time using only nearest neighbor interactions, enabling almost instant $n$-partite genuine entanglement.
As we show below, systems with such a feature do exist.

One example is the one-dimensional cluster state.
Cluster states (also called graph states) are multipartite entangled states~\cite{Hein04} useful for one-way quantum computation~\cite{RBB03,N06}.
They have a simple visual representation using associated graphs.
For a graph $G=(V,E)$, the corresponding cluster state can be constructed as follows: 
(i) Associate each vertex in $V$ with one qubit initialized in state $\ket{+}=\frac{\ket{0}+\ket{1}}{\sqrt{2}}$;
and (ii) Apply a controlled-$Z$ gate to every pair of qubits connected by an edge in $E$.
A controlled-$Z$ gate on two qubits $i$ and $j$ can be implemented by evolving the system for a time $\frac{\pi}{4}$ under the Hamiltonian,
\begin{align}
	H^{(i,j)}_{cZ} = \mathbb{I} + Z_i+Z_j-Z_iZ_j, \label{EQ_H_cZ}
\end{align}
where $Z$ is the diagonal Pauli matrix.
Some cluster states, e.g.\  Fig.~\ref{cluster}, only require application of finite-range controlled-$Z$s. 
Meanwhile, the generating Hamiltonians \eqref{EQ_H_cZ} commute with each other and therefore they can be applied simultaneously. 
Therefore such cluster states as well as their correlations can be created in constant time $\mathcal{O}(1)$.
In particular, within an $n$-independent time $\frac{\pi}{4}$ we can create $|u_n(Y_1,X_2,X_3\dots ,X_{n-1},Y_n)| = 1$ in a cluster state with only nearest neighbor interactions (Fig.~\ref{cluster}(a).
This example shows that $n$-point connected correlators of order $\mathcal{O}(1)$ can be created in unit time $\mathcal{O}(1)$, independent of a system's size.
Yet, we can do better, i.e.\ we can create \emph{exponentially large} $n$-point connected correlators in unit time.
For example, in the cluster state of Fig.~\ref{cluster}b, we allow each site to interact within a larger neighborhood. 
It still takes $\frac{3\pi}{4}=\mathcal{O}(1)$ unit time to prepare the state, while direct calculation shows that one of its correlators grows exponentially as $2^{\frac{n-1}{3}}$ (Appendix \ref{APP_cal_corr}).

 \begin{figure}[t]
    \centering
    \includegraphics[width=0.45\textwidth]{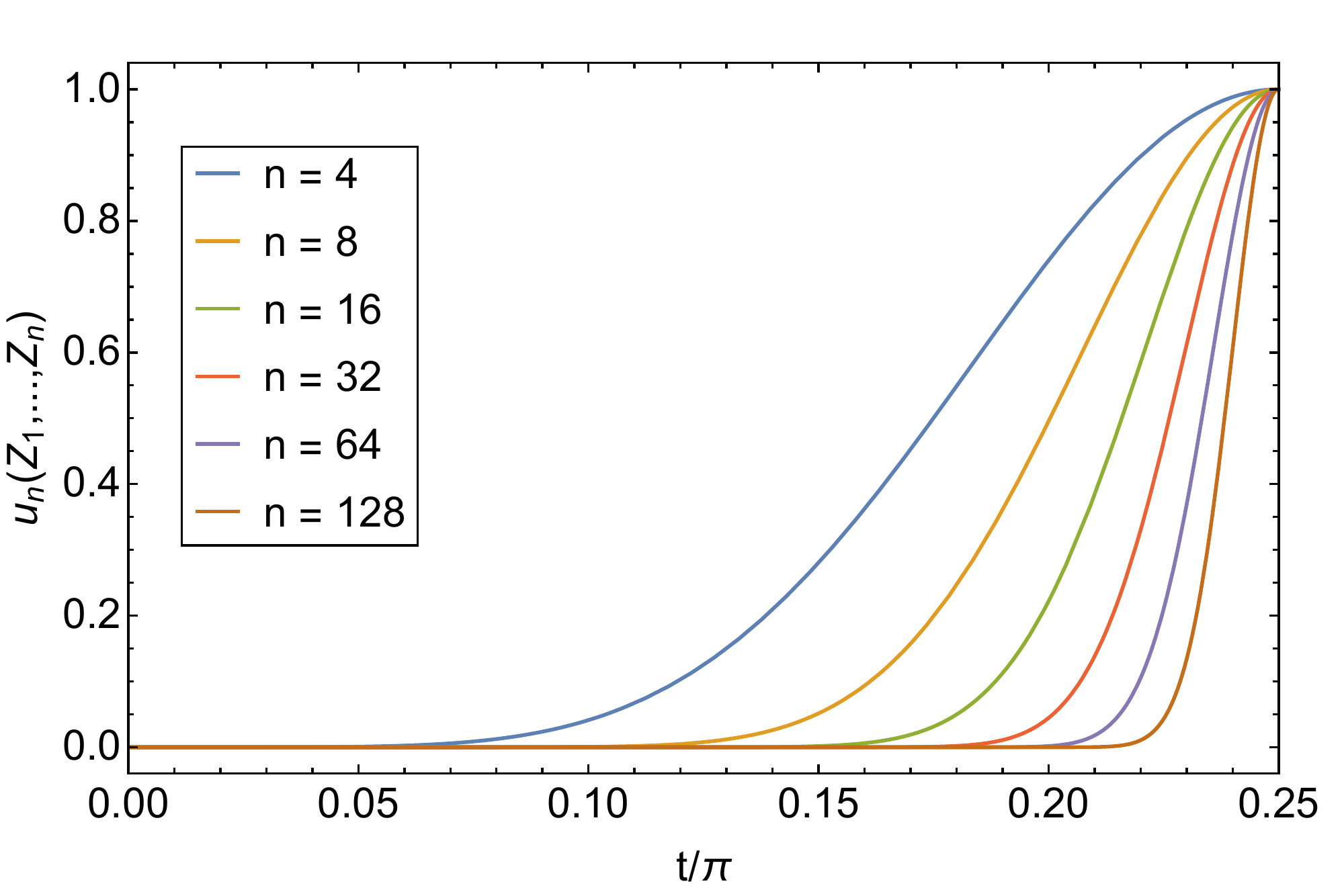}
    \caption{Time evolution of the $n$-point connected correlator $u_2(Z_1,Z_2,\dots, Z_n) = \left[\sin^2(2t)\right]^{n-1}$ of the state in Fig.~\ref{cluster}a for different $n$. Here we plot the time-dependent correlator for a few values of $n$. In the limit of large $n$ the correlator remains zero for most of the time before briefly jumping to 1 at $t = \frac{\pi}{4}$.}
    \label{fig_corrtime}
\end{figure}
In the above examples we have discussed how much time it takes to grow connected correlations from fully uncorrelated states.
A relevant question is whether it can be expedited by some initial correlations~\cite{K15}.
To answer this question, we look at the time dependence of connected correlators in an $n$-qubit system initialized to $\ket{00\dots 0}$ and evolved under the Hamiltonian $\sum_{\avg{i,j}} X_iX_j$.
If this system has the geometry of Fig.~\ref{cluster}a, we find the $n$-point connected correlator $u_n(Z_1,\dots,Z_n) = \left[\sin^2(2t)\right]^{n-1}$ (Appendix \ref{APP_cal_corr}).
We plot this function for a few values of $n$ in Fig.~\ref{fig_corrtime}.
For large $n$ the correlator remains negligible for most of the time and rapidly grows to 1 near $t = \frac{\pi}{4}$.
In other words, the connected correlator only needs a very small time $\delta t\ll 1$ to grow from almost zero to a significant value.
It gives evidence that creation of multipartite states can be expedited by small initial correlations.
We remark that while the exact correlator $u_n(Z_1,\dots,Z_n)$ is negligible at any time before $\frac{\pi}{4}$, there may exist other sets of observables for which $n$-point connected correlators are non-negligible.

\Sec{Outlook}
Although the relation between genuine multipartite entanglement and multipartite connected correlations is simple for pure states, whether one can infer any information about genuine multipartite entanglement of a mixed state from its multipartite correlations is still an open question.

In our model, only short-range interactions between two sites are present. 
An immediate question is how the Lieb-Robinson bounds generalize to other models with long-range interactions or interaction terms which involve more than two sites.

Current techniques to measure multipartite connected correlators require statistics of all measurement outcomes that factor into Eq.~\eqref{EQ_npointdef}.
Connected correlators up to tenth order have been measured using this approach~\cite{SKEMRCLGBS17}. 
However, such a method is infeasible for connected correlators of very high order,
as the number of disconnected correlators that must be measured grows exponentially with $n$. 
It is therefore an open question whether there exist experimentally accessible observables, e.g.\ magnetization~\cite{GBSWBR16}, which manifest multipartite connected correlators directly.

\acknowledgments
We thank Z. Eldredge, D. Ferguson, M. Foss-Feig, and J. Schmiedmayer for helpful discussions.
This project is supported by the NSF-funded Physics Frontier Center at the JQI, the NIST NRC Research Postdoctoral Associateship Award,
NSF QIS, AFOSR, ARO MURI, ARL CDQI, and ARO.


\appendix

\Sec{Proof of Lemma \ref{LemSep2}}
\label{APP_Sep2}

In this section we provide a proof of Lemma \ref{LemSep2}. 
One direction of the lemma is straightforward.
If the joint state is a product, i.e.\ $\rho = \rho_\mathcal{X}\otimes \rho_{\tilde{\mathcal{X}}}$, then all bipartite disconnected correlators between $A_\mathcal{X}\in \mathcal{S}(\mathcal{X})$ and $A_{\tilde{\mathcal{X}}}\in \mathcal{S}({\tilde{\mathcal{X}}})$ are factorizable, $\avg{A_\mathcal{X}A_{\tilde{\mathcal{X}}}}=\avg{A_\mathcal{X}}\avg{A_{\tilde{\mathcal{X}}}}$.
Therefore all bipartite connected correlators vanish.
To prove the opposite direction, that is vanishing of all bipartite connected correlators implies $\rho$ is a product state, let $\left\{\Gamma^\mathcal{X}_\mu\right\}$ denote a complete normalized basis for density matrices of $\mathcal{X}$, and likewise for $\left\{\Gamma^{\tilde{\mathcal{X}}}_\mu\right\}$.
Any joint state of $\mathcal{X}$ and ${\tilde{\mathcal{X}}}$ may be written as
\begin{align}
	\rho = \frac{1}{N}\bigg(\mathbb{I}_{\mathcal{X}\cup{\tilde{\mathcal{X}}}}&+\sum_\mu \avg{\Gamma^\mathcal{X}_\mu}\Gamma^\mathcal{X}_\mu\otimes \mathbb{I}_{{\tilde{\mathcal{X}}}}+\sum_\nu \avg{\Gamma^{\tilde{\mathcal{X}}}_\nu} \mathbb{I}_{\mathcal{X}}\otimes \Gamma^{\tilde{\mathcal{X}}}_\nu\nonumber\\
    &+\sum_{\mu,\nu}\avg{\Gamma^{\mathcal{X}}_\mu\otimes\Gamma^{\tilde{\mathcal{X}}}_\nu}\Gamma^{\mathcal{X}}_\mu\otimes\Gamma^{\tilde{\mathcal{X}}}_\nu\bigg),
\end{align}
where $N = |\mathcal{H}_{\mathcal{X}}\otimes \mathcal{H}_{{\tilde{\mathcal{X}}}}|$ is the dimension of the joint Hilbert space.
Since all bipartite connected correlators vanish, 
\begin{align}
	\avg{\Gamma^{\mathcal{X}}_\mu\otimes\Gamma^{\tilde{\mathcal{X}}}_\nu} = \avg{\Gamma^{\mathcal{X}}_{\mu}}\avg{\Gamma^{\tilde{\mathcal{X}}}_\nu}
\end{align}
for all $\mu,\nu$.
Therefore $\rho$ is also factorizable,
\begin{align}
\rho = \frac{1}{N} \bigg(\mathbb{I}_\mathcal{X}+\sum_\mu \avg{\Gamma^\mathcal{X}_\mu}\Gamma^\mathcal{X}_\mu\bigg)\otimes \bigg(\mathbb{I}_{\tilde{\mathcal{X}}}+\sum_\nu \avg{\Gamma^{\tilde{\mathcal{X}}}_\nu}\Gamma^{\tilde{\mathcal{X}}}_\nu\bigg).
\end{align}
Thus the lemma follows.

\Sec{Equivalent definitions of multipartite connected correlator}
\label{APP_defnpoint}
In this section we present some definitions of the multipartite connected correlation function which are equivalent to Eq.~\eqref{EQ_npointdef}.
The multipartite connected correlator can also be generated by \citep{S75}:
\begin{align}
	u_n(A_1,\dots,A_n) = \left[\frac{\partial^n}{\partial\lambda_1\dots\partial\lambda_n}\ln\avg{e^{\sum_{i=1}^n \lambda_i A_i}}\right]_{\vec \lambda=0}\label{DEF_npoint2},
\end{align}
where the partial derivative is evaluated at $\vec\lambda = (\lambda_1,\dots,\lambda_n)=0$.
This generating form will be used in Appendix \ref{APP_cal_corr} to evaluate multipartite connected correlators of the GHZ state.
An equivalent way to define multipartite connected correlators is via lower-order correlators,
\begin{align}
	u_n \left(A_1,\dots, A_n\right) =\avg{A_1\dots A_n} -\sum_{P}^{}{}^{'} \prod_{p\in P}u_{|p|}\left(\tilde A_p\right),\label{DEF_npoint3}
\end{align}
where the sum $\sum_{P}^{'}$ is taken over all partitions of $\left\{\mathcal{X}_1,\dots,\mathcal{X}_n\right\}$ except for the trivial partition $P = \left\{\mathcal{X}_1,\dots,\mathcal{X}_n\right\}$, and $\tilde A_p = \left\{A_i:i\in p\right\}$ denotes the set of all observables with indices in set $p$.
We shall find this definition useful for the inductive proof of Theorem \ref{TH_npointLR} and in Appendix \ref{APP_cal_corr}.

\Sec{Proof of Lemma \ref{lemma_sep}}
\label{APP_sep}
In this section we prove the connection between factorizability and vanishing connected correlators in Lemma \ref{lemma_sep}.
We shall prove this lemma inductively using generating functions of multipartite connected correlators \eqref{DEF_npoint2},
\begin{align}
	&\ln \avg{\exp\left({\sum_{i=1}^{k_1}\lambda_{i}A_{i}+\sum_{j=1}^{k_2}\lambda'_{j}B_{j}}\right)} \nonumber\\
    &=\ln \avg{\exp\left(\sum_{i=1}^{k_1}\lambda_{i}A_{i}\right)}+\ln\avg{\exp\left(\sum_{j=1}^{k_2}\lambda'_{j}B_{j}\right)}.
\end{align}
The first term on the right hand side is independent of any $\lambda'_j$.
Therefore, partial derivatives with respect to $\lambda'_j$s will make the first term vanish.
Similarly, the second term will also vanish after partial derivatives with respect to $\lambda_i$s.
Therefore multipartite connected correlators, which are $n^{\text{th}}$ order partial derivatives of the left hand side with respect to both $\lambda_i$s and $\lambda'_j$s, will vanish.
The lemma follows.

\Sec{Proof of Theorem 1}
\label{APP_maintheoremproof}
In this section we prove Theorem 1 by induction on $n$.
When $n = 2$, the inequalities reduce to bipartite Lieb-Robinson bounds.
Assuming that it holds for any $2\leq n \leq k-1$, we shall prove that it holds for $n = k$.
We start with the recursive definition of connected correlators (Appendix \ref{APP_defnpoint}):
\begin{align}
	\avg{A_1\dots A_k} = \sum_{P\in \mathcal{P}(S)} \prod_{p\in P} u_{|p|}\bigg(\tilde A_p \bigg), \label{DEF_Un}
\end{align}
where $\mathcal{P}(S)$ denotes the set of all partitions of $S={1,\dots,k}$, and $\tilde A_p = \left\{A_i:i\in p\right\}$ denotes the set of all observables with indices in set $p$.
Consider one particular bipartition of $S$, e.g.\  $S = S_1\cup S_2$ such that $S_1\cap S_2 = \varnothing$.
The partitions of $S$ can then be divided into two types. 
Partitions of the first type have elements that lie entirely on either $S_1$ or $S_2$.
They therefore belong to the set $\mathcal{P}(S_1)\oplus\mathcal{P}(S_2)$.
The sum over these partitions in Eq.~\eqref{DEF_Un} can then be factored into a product of two sums over $\mathcal{P}(S_1)$ and $\mathcal{P}(S_2)$,
\begin{align}
	&\bigg[\sum_{P_1\in \mathcal{P}(S_1)} \prod_{p_1\in P_1} u_{|p_1|}\bigg(\tilde A_{p_1}\bigg)\bigg]\bigg[\sum_{P_2\in \mathcal{P}(S_2)} \prod_{p_2\in P_2} u_{|p_2|}\bigg(\tilde A_{p_2}\bigg)\bigg]\nonumber\\
    &=\avg{\prod_{i\in S_1} A_{i}}\avg{\prod_{i\in S_2} A_{i}},
\end{align}
where we have used the definition \eqref{DEF_Un} for the sets $S_1$ and $S_2$.
The terms in Eq.~\eqref{DEF_Un} we have not yet summed over are partitions in which some elements overlap with both $S_1$ and $S_2$, namely $\mathcal{P}(S)\setminus\mathcal{P}(S_1)\oplus\mathcal{P}(S_2)\equiv \mathcal{P}_{12}$.
We can then rewrite Eq.~\eqref{DEF_Un} as

\begin{align}
	\avg{A_1\dots A_k} =&u_k(A_1,\dots ,A_k)+\avg{\prod_{i\in S_1} A_{i}}\avg{\prod_{i\in S_2} A_{i}}\nonumber\\
    &+\sum_{P_3\in \mathcal{P}_{12}} \prod_{p_3\in P_3} u_{|p_3|}\bigg(\tilde A_{p_3}\bigg). \label{EQN_8}
\end{align}
Rearranging Eq.~\eqref{EQN_8} in terms of bipartite connected correlators, we have 
\begin{align}
	u_k(A_1,\dots ,A_k) &= u_2\left(\prod_{i\in S_1}A_i,\prod_{i\in S_2}A_i\right)\nonumber\\
    &-\sum_{P_3\in \mathcal{P}_{12}} \prod_{p_3\in P_3} u_{|p_3|}\bigg(\left\{A_{i\in p_3}\right\}\bigg).
\end{align}
Therefore,
\begin{align}
	\left|u_k(A_1\dots A_k) \right|&\leq \left|u_2\left(\prod_{i\in S_1}A_i,\prod_{i\in S_2}A_i\right)\right|\nonumber\\
    &+\sum_{P_3\in \mathcal{P}_{12}} \prod_{p_3\in P_3}\left| u_{|p_3|}\bigg(\left\{A_{i\in p_3}\right\}\bigg)\right|.\label{EQN_Bound}
\end{align}
The first term is bounded by $\propto \exp(v t - d(S_1,S_2))$, where the distance between subsystems $S_1$ and $S_2$, i.e.\ $d(S_1,S_2)$, is defined as the smallest separation distance between a site in $S_1$ and a site in $S_2$.
To bound the second term, we first realize that the connected correlators here are between at most $k-1$ points, and therefore our induction hypothesis applies.
For each connected correlator $u$, there can be two possibilities. 
It can involve subsystems supported by both $S_1$ and $S_2$, or supported by either $S_1$ or $S_2$ alone.
If we sum over those of the second type, we again get expectation values which are bounded by 1.
For the connected correlator $u$ that involves qubits in both $S_1$ and $S_2$, by the induction hypothesis it is bounded by $\exp(vt-r)$, where $r$ is the largest distance between any bipartitions of the subsystems. 
By dividing those subsystems into those in $S_1$ and those in $S_2$, the distance $r$ has to be at least the one between $S_1$ and $S_2$, i.e.\ $r \geq d(S_1,S_2)$.
Therefore the second term in Eq.~\eqref{EQN_Bound} is also bounded  by $\exp(v t-d(S_1,S_2))$.
In the end, we get
\begin{align}
	\left|u_k(A_1,\dots ,A_k)\right| \leq C_k \exp\left[v_\mathrm{LR} t - d(S_1,S_2)\right] \label{Multi-LR}
\end{align}
for some constant $C_k$ to be determined.
For each choice of bipartition $\left\{S_1,S_2\right\}$, we get one such inequality.
The tightest bound is obtained from the bipartition with the largest distance $d$, i.e.
\begin{align}
	\left|u_k(A_1,\dots ,A_k)\right| \leq C_k \exp\left[v_\mathrm{LR} t - R\right] 
\end{align}
with $R = \max_{S_1} d(S_1,S_2)$.
Thus the hypothesis is true for $n=k$, and by induction it holds for any $n$.

We now prove the second part of the theorem, i.e.\ $C_n\leq n^n \frac{C_2}{4}$.
Clearly it holds for $n = 2$.
We prove that if the statement holds up to $n = k-1$, it must also hold for $n = k$.
Recall that a $k$-point connected correlator is bounded by Eq.~\eqref{EQN_Bound}.
The first term of Eq.~\eqref{EQN_Bound} is bounded by 1.
We need to find a bound for the sum.
Note that at the critical time $t = R/v$, the only non-negligible contributing terms are those involving $S'_1\subset S_1$ and $S'_2 \subset S_2$ such that the distance between $S'_1$ and $S'_2$ is exactly $R$ (by construction the distance is at least $R$).

Let $S^{(0)}_1\subset S_1$ and $S^{(0)}_2\subset S_2$ be such that the distance between any $s_1\in S^{(0)}_1$ and $s_2\in S^{(0)}_1$ is always $R$.
The point is that only connected correlators that involve such $s_1$ and $s_2$ will contribute to the sum.
We now count the contribution from such correlators.
If we take $k_1$ subsystems from $S^{(0)}_1$, $k_2$ subsystems from $S^{(0)}_2$ and $k_3$ subsystems from $S^{(0)}_3=S \setminus S_1^{(0)}\cup S_2^{(0)}$, their contribution is $\mathcal{O}\left((k_1+k_2+k_3)^{k_1+k_2+k_3}\right)$.
Note that summing over connected correlators of leftover subsystems, we get their disconnected correlator, which is bounded by 1.
Note also that by counting this way, some terms will appear more than once, so we get a loose bound.
Denoting by $m_1,m_2,m_3$ the size of $S^{(0)}_1,S^{(0)}_2$ and $S^{(0)}_3$, we can bound the constant $C_k$ by summing over all possible choices of $k_1 + k_2 + k_3\leq k-1$,
\begin{widetext}
\begin{align}
	C_k&\leq \frac{C_2}{4} \sum_{k_1=1}^{m_1}\sum_{k_2=1}^{m_2}\sum_{k_3=0}^{m_3} \binom{m_1}{k_1}\binom{m_2}{k_2}\binom{m_3}{k_3}(k_1+k_2+k_3)^{k_1+k_2+k_3}\\
	&\leq\frac{C_2}{4}  \sum_{k_1=1}^{m_1}\sum_{k_2=1}^{m_2}\sum_{k_3=0}^{m_3} \binom{m_1}{k_1}\binom{m_2}{k_2}\binom{m_3}{k_3}(k-1)^{k_1+k_2+k_3}\\
	&=\frac{C_2}{4}(k^{m_1}-1)(k^{m_2}-1)k^{m_3}<\frac{C_2}{4} k^{m_1+m_2+m_3} =k^k \frac{C_2}{4}.
\end{align}
Thus $C_k\leq k^k \frac{C_2}{4}$ holds for $n = k$, and by induction it holds for any $n$.
\end{widetext}

\Sec{Calculation of connected correlators}
\label{APP_cal_corr}
In this section we show how connected correlators are calculated for the GHZ states, the cluster states and the product state evolved under the $XX$ Hamiltonian.
\SubSec{The GHZ state}
\label{APP_GHZ_corr}
The generating function of $u_n(Z_1,\dots,Z_n)$ evaluated for the GHZ state of $n$ qubits is
\begin{align}
	g_n&\equiv  \ln\avg{\exp\left\{\sum_{i=1}^n \lambda_i Z_i\right\}}_{\text{GHZ}}\\
    &=\ln\left[\frac{1}{2}\exp\left(\sum_{i=1}^n\lambda_i\right)+\frac{1}{2}\exp\left(-\sum_{i=1}^n\lambda_i\right)\right]\\
    &=\ln\left[\cosh\left(\sum_{i=1}^n\lambda_i\right)\right].
\end{align}
Let $\lambda\equiv \sum_{i=1}^n\lambda_i$. Then
\begin{align}
	\frac{\partial g_n}{\partial \lambda_i} = \frac{\partial g_n}{\partial \lambda}\frac{\partial \lambda}{\partial \lambda_i} = \frac{\partial g_n}{\partial \lambda}
\end{align}
for all $i$.
Therefore the multipartite connected correlator above is given by
\begin{align}
	u_n(Z_1,\dots,Z_n) = \frac{\partial^ng_n}{\partial\lambda^n}\Bigg|_{\lambda=0}=\Bigg[\frac{\partial^n}{\partial\lambda^n}\ln(\cosh\lambda)\Bigg]_{\lambda=0}.
\end{align}
Note that this connected correlator has the same parity as $n$.
Therefore for odd $n$, it vanishes. 
For even $n$, the correlator is given by
\begin{align}
	u_n = \frac{2^n(2^n-1)B_n}{n},
\end{align}
where $B_n$ is the $n^{\text{th}}$ Bernoulli number.
In the large $n$ limit, the Bernoulli number is approximated by
\begin{align}
	|B_n|\approx 4\sqrt{\frac{\pi n}{2}}\left(\frac{n}{2\pi e}\right)^{n}.
\end{align}
Therefore the $n$-point connected correlator of the GHZ state grows as $u_n\propto n^{-1/2}(\frac2{\pi e})^n n^n= \mathcal{O}(n^n)$.

\SubSec{The cluster states}
\label{APP_clustercorr}
For each vertex $i$ in a cluster state's graph, we can associate an operator $X_i\prod_{j\in\mathcal{N}(i)} Z_j$, where $\mathcal{N}(i)$ denotes the set of vertices adjacent to $i$.
These operators generate a stabilizer group of which the cluster state is a simultaneous eigenstate.
Operators outside of this group have no disconnected correlations.
Using the stabilizer group, we can count the number of contributing disconnected correlators in the definition of connected correlators \eqref{EQ_npointdef}.
For example, for the observables $Y_1,X_2,X_3,\dots,X_{n-1},Y_n$ in the cluster state in Fig.~\ref{cluster}(a), all low-order disconnected correlators vanish.
Therefore,
\begin{align}
	&u_n(Y_1,X_2,X_3,\dots ,X_{n-1},Y_n) \nonumber\\
    =& \avg{Y_1X_2X_3\dots X_{n-1}Y_n} = 1. 
\end{align}
Similarly, by direct counting we find the $n$-point connected correlator of the Fig.~\ref{cluster}(b) cluster state $u_n(\left\{T_j:j=1,\dots,n\right\}) = 2^{\frac{n-1}{3}}$, where $T_j = X_j$ for all $1< j<n$ such that $j\equiv 1$ (mod 3), and $T_j=Y_j$ otherwise.

\SubSec{The product state evolved under the $XX$ Hamiltonian}
The time evolution shown in Fig.~\ref{fig_corrtime} can be verified as follows. 
The time-dependent state of $n$ qubits evolving from $\ket{00\dots0}$ under $H = \sum_{\avg{i,j}}X_iX_j$ can be written in the form of a matrix product state,
\begin{equation}
    \ket{\psi(t)} = \sum_{i_1,\ldots,i_n \in \{0,1\}} c_{i_1 i_2 \ldots i_n} (t) \ket{i_1 i_2 \ldots i_n},
\end{equation}
the coefficients of which are given by
\begin{equation}
    c_{i_1 i_2 \ldots i_n} (t) = L_{i_1} A_{i_2}(t) A_{i_3}(t) \cdots A_{i_{n-1}}(t) R_{i_n}(t),
\end{equation}
where
\begin{align}
    L_0 &= \begin{pmatrix} 1 & 0 \end{pmatrix}, \\
    L_1 &= \begin{pmatrix} 0 & 1 \end{pmatrix}, \\
    A_0(t) &= \begin{pmatrix} \cos t & 0 \\ 0 & -i \sin t \end{pmatrix}, \\
    A_1(t) &= \begin{pmatrix} 0 & \cos t \\ -i \sin t & 0 \end{pmatrix}, \\
    R_0(t) &= \begin{pmatrix} \cos t \\ 0 \end{pmatrix}, \\
    R_1(t) &= \begin{pmatrix} 0 \\ -i \sin t \end{pmatrix}.
\end{align}
Note that this matrix product state is in left canonical form (i.e.\ $\sum_i L^\dagger_i L_i = \sum_i A^\dagger_i A_i = I$) and it is normalized ($\sum_i R_i^\dagger R_i = 1$).
Our goal is to first determine all disconnected correlators of the form $\avg{O_1 O_2 \cdots O_n}$ where $O_i$ is either $I$ or $Z$.  Because all such operators are diagonal on each site, we can write the expectation value itself as a matrix product.  In the end, we find that the disconnected correlator picks up a factor of $\cos(2t)$ for each ``boundary'' between a $Z$ operator and an $I$ operator.  For instance, on a 5-qubit system, the expectation value $\avg{Z_2 Z_3 Z_5} = \avg{IZZIZ} = \left[\cos(2t)\right]^3$, as there are 3 relevant boundaries: between qubits 1--2, 3--4, and 4--5.

From this, it is already obvious that our connected correlator $u_n\left(Z_1, \ldots, Z_n\right)$ will be some polynomial of the variable $\cos(2t)$.  Given some partition $\mathcal{P}$, we would like to determine the power to which $\cos(2t)$ is raised.  Let us, for sake of example, denote our partition by letters of the alphabet.  
On 5 qubits, ABBCA corresponds to the product of disconnected correlators $\avg{Z_1Z_5}\avg{Z_2Z_3}\avg{Z_4} = \avg{ZIIIZ}\avg{IZZII}\avg{IIIZI} = \left[\cos(2t) \right]^6$.  
In general, the product of disconnected correlators will be $\left[ \cos(2t) \right]^{2v}$ where $v$ is the number of bonds that border two distinct subsets of the partition.  (In the case of the example ABBCA, this includes each bond except the one between sites 2--3, which are both in the same subset, B.)

Now we would like to count the number of partitions which contribute to the term with power $2v$.  Because the coefficient in the connected correlator depends on the number of subsets in the partition $|\mathcal{P}|$, we must consider separately partitions with different numbers of subsets.  Given $n$ qubits, there are $n-1$ bonds between qubits.  Thus there are $\binom{n-1}{v}$ different ways to choose $v$ bonds which connect different subsets of the partition.  Given these $v$ bonds, there are $\stirling{v}{a}$ different ways to construct partitions with $(a+1)$ total subsets.  (Here, $\stirling{v}{a}$ denotes a Stirling number of the second kind.)  Thus, the number of partitions on $n$ qubits with $v$ bonds that border two distinct subsets and with $(a+1)$ total subsets is $\binom{n-1}{v} \stirling{v}{a}$.  Note that $\sum_{v=0}^{n-1} \binom{n-1}{v} \sum_{a=0}^v \stirling{v}{a}$ is equal to the $n^{\text{th}}$ Bell number $\mathcal{B}_n$, so we have indeed accounted for all possible partitions.

As mentioned above, given a partition, two factors of $\cos(2t)$ are picked up for each bond that borders two distinct subsets.  In general, we can compute the expectation value of the connected correlator from Eq.~\eqref{EQ_npointdef} as follows:
\begin{widetext}
\begin{align}
  u_n\left(Z_1, \ldots, Z_n\right) &= \sum_\mathcal{P} (-)^{|\mathcal{P}| - 1} (|\mathcal{P}| - 1)! \prod_{P\in\mathcal{P}} \avg{\prod_{p\in P} Z_p}= \sum_{v=0}^{n-1} \sum_{a=0}^v (-1)^a \, a!\, \binom{n-1}{v} \stirling{v}{a} \left[ \cos(2t) \right]^{2v} \nonumber \\
  &= \sum_{v=0}^{n-1} \binom{n-1}{v} \left[ \cos(2t) \right]^{2v} \sum_{a=0}^v (-1)^a \, a!\, \stirling{v}{a} = \sum_{v=0}^{n-1} \binom{n-1}{v} \left[ \cos(2t) \right]^{2v} (-)^v \nonumber \\
  &= \sum_{v=0}^{n-1} \binom{n-1}{v} \left[ -\cos^2(2t) \right]^v = \left[ 1 - \cos^2(2t) \right]^{n-1} = \left[ \sin^2(2t) \right]^{n-1},
\end{align}
\end{widetext}
where we have used the identity $\sum_{a=0}^v (-1)^a \, a!\, \stirling{v}{a} = (-1)^v$ \citep{G10}.


\bibliographystyle{apsrev4-1}
\bibliography{multipartiteLR}

\end{document}